\newtheorem{theorem}{Theorem}[section]
\begin{document}

\def\ss{\sigma}
\def\fpzi{\frac{\partial}{\partial z_i}}
\def\fpzj{\frac{\partial}{\partial z_j}}
\def\BM{{\bf M}}
\def\BI{{\bf I}}
\def\BA{{\bf A}}
\def\BB{{\bf B}}
\def\E{{\bf E}}
\def\BD{{\bf D}}
\def\BT{{\bf T}}
\def\BX{{\bf X}}
\def\BU{{\bf U}}
\def\Bu{{\bf u}}
\def\BV{{\bf V}}
\def\Bv{{\bf v}}
\def\BI{\hbox{I}}
\def\BJ{\hbox{J}}
\def\det{\hbox{\rm det}}
\def\tr{\hbox{\rm tr}}
\def\fpx{\frac{\partial}{\partial x}}
\def\fpy{\frac{\partial}{\partial y}}
\def\det{\hbox{\rm det}}

\title{On the Optimal Transmission Scheme to Maximize Local Capacity in Wireless Networks}

\author{
Salman Malik\footnote{INRIA Paris-Rocquencourt, France. Email: \texttt{salman.malik@inria.fr}}, 
Philippe Jacquet\footnote{INRIA Paris-Rocquencourt, France. Email: \texttt{philippe.jacquet@inria.fr}}
}
\date{}

\maketitle

\begin{abstract}

We study the optimal transmission scheme that maximizes the local capacity in two-dimensional (2D) wireless networks. Local capacity is defined as the average information rate received by a node randomly located in the network. Using analysis based on analytical and numerical methods, we show that maximum local capacity can be obtained if simultaneous emitters are positioned in a grid pattern based on equilateral triangles. We also compare this maximum local capacity with the local capacity of slotted ALOHA scheme and our results show that slotted ALOHA can achieve at least half of the maximum local capacity in wireless networks.

\end{abstract}

\section{Introduction}
\label{sec:intro}

Seminal work of Gupta \& Kumar~\cite{Gupta:Kumar} and the following works, {\it e.g.}, \cite{scaling,scaling2} quantify the capacity in wireless networks in the form of asymptotic scaling laws. However, these results may not be very useful for network protocol designers in comparing different medium access schemes that have different protocol overhead but follow the same scaling behavior. Our goal is to investigate the medium access scheme which optimizes the {\em local capacity}. Note that any such scheme may have no practical implementation but its evaluation is interesting in order to establish an upper bound on the local capacity in wireless networks. In our analysis, we will use first and second order differentiation of local capacity to prove that simultaneous emitters arranged in a grid pattern are locally optimal and, in $2D$ wireless networks, only square, hexagonal and triangular grid patterns are most optimal patterns.  

This article is organized as follows. In Section~\ref{sec:model}, we will discuss the model of our wireless network and define the local capacity. We will summarize some important related works in Section~\ref{sec:context}. The optimality of grid pattern based medium access schemes will be discussed in Section~\ref{sec:grid_optimality} and their local capacity will be analyzed in Section~\ref{sec:rx_area_2}. Section~ \ref{sec:aloha} will discuss the local capacity of simple ALOHA based scheme. In Section \ref{sec:evaluate}, we will evaluate the local capacity of grid pattern schemes and slotted ALOHA and concluding remarks can be found in Section \ref{sec:conclude}.

\section{System Model}
\label{sec:model}

We consider a wireless network where nodes are uniformly distributed over an infinite plane centered at origin $(0,0)$. We assume that time is slotted and at any given slot, simultaneous emitters in the network are distributed like a set of points, \mbox{${\cal S}=\{z_1,z_2,\ldots,z_n,\ldots\}$}, where $z_i$ is the location of emitter $i$. The distribution of set ${\cal S}$ depends on the medium access scheme employed by the nodes and we only assume that, in all slots, the set ${\cal S}$ has a homogeneous density equal to $\lambda$. 

Let $P_i$ be the transmit power of node $i$ and $\gamma_{ij}$ be the channel gain from node $i$ to node $j$ such that the received power at node $j$ is $P_i\gamma_{ij}$. Therefore, transmission from node $i$ to node $j$ is successful if the following condition is satisfied
$$
\frac{P_i\gamma_{ij}}{N_0+\sum_{k\neq i}P_k\gamma_{kj}}\geq \beta~,
$$
where $N_0$ is the background noise power and $\beta$ is the minimum signal to interference ratio (SIR) required for successfully receiving the packet. We assume that all nodes use unit nominal transmit power and we only consider large-scale pathloss characteristics, {\it i.e.}, \mbox{$\gamma_{ij}=\vert z_i-z_j\vert^{-\alpha}$}, where \mbox{$\alpha>2$} is the pathloss exponent and $\vert .\vert$ is the Euclidean norm of the vector. We also assume that $N_0$ is negligible. Therefore, the SIR of emitter $i$ at any point $z$ in the plane is given by
\begin{equation}
S_i(z)=\frac{\vert  z-z_i\vert^{-\alpha}}{\sum_{j\neq i}\vert  z-z_j\vert^{-\alpha}}~.
\label{eq:sinr}
\end{equation} 

We call the reception area of emitter $i$, the area of the plane, $A_i(\lambda,\beta,\alpha)$, where this emitter is received with SIR at least equal to $\beta$. The area, $A_i(\lambda,\beta,\alpha)$, also contains the point $z_{i}$ since here the SIR is infinite. The average size of $A_i(\lambda,\beta,\alpha)$ is $\sigma(\lambda,\beta,\alpha)$ and it is independent of the location of $z_i$.

We are interested in local capacity which is defined as the average information rate received by a node {\em randomly} located in the network. Consider a node at a random location $z$ in the plane and let $N(z,\beta,\alpha)$ be the number of reception areas it belongs to. The expected value of $N(z,\beta,\alpha)$ is given by~\cite{Jacquet:2009}
\begin{equation}
\E(N(z,\beta,\alpha))=\lambda\sigma(\lambda,\beta,\alpha)~.
\label{eq:avg_no}
\end{equation}
The average information rate received by the node, $c(z,\beta,\alpha)$, is equal to $\E(N(z,\beta,\alpha))$ multiplied by the nominal capacity. We assume unit nominal capacity and we have 
\begin{equation}
c(z,\beta,\alpha)=\E(N(z,\beta,\alpha))=\lambda\sigma(\lambda,\beta,\alpha)~. 
\label{eq:poisson_hand_over_no}
\end{equation}

\section{Motivation and Related Works}
\label{sec:context}

In related works, focus has been on the medium access schemes like ALOHA, carrier sense multiple access (CSMA) or, in some instances, node coloring as well. Some of these works are as follows. \cite{Nelson:Kleinrock} studied slotted ALOHA using a very simple geometric propagation model. Under a similar propagation model, \cite{CSMA} evaluated CSMA and compared it with slotted ALOHA in terms of throughput. \cite{Bartek} used simulations to analyze CSMA under a realistic SIR based interference model and compared it with ALOHA (both slotted and un-slotted). \cite{Weber,Weber2,Weber3} introduced the concept of {\em transmission capacity}, defined as the maximum number of successful transmissions per unit area at a specified outage probability, and evaluated ALOHA, CSMA and code division multiple access (CDMA) protocols. \cite{Haenggi} analyzed local (single-hop) throughput and capacity with slotted ALOHA, in networks with random and deterministic node placement, and TDMA, in $1D$ line-networks only. \cite{Zorzi2} determined the optimum transmission range under the assumption that interferers are distributed according to Poisson point process whereas \cite{SR-ALOHA} gave a detailed analysis on the optimal probability of transmission for ALOHA which optimizes the product of simultaneously successful transmissions per unit of space by the average range of each transmission. In contrast to these works, we investigate the most optimal medium access scheme which optimizes the local capacity in wireless networks. We will also compare this optimal scheme with slotted ALOHA. More detailed comparison with other schemes will be done in the continuation of our work.
 
\section{Optimality of Grid Pattern Schemes}
\label{sec:grid_optimality}

It can be argued that optimal local capacity in wireless networks can be achieved if simultaneous emitters are positioned in a grid pattern. However, designing a medium access scheme, which ensures that simultaneous emitters are positioned in a grid pattern, is very difficult because of the limitations introduced by wave propagation characteristics and actual node distribution. Specification of a distributed protocol that allows grid pattern transmissions is beyond the scope of this article. Note that wireless networks of grid topologies are studied in, {\it e.g.}, \cite{Liu:Haenggi,Hong:Hua}. In contrast, we assume that only the simultaneous emitters form a regular grid pattern. 

\subsection{Definitions}

In order to simplify our analysis, we define the following functions.
\begin{compactenum}

\item The density of the set ${\cal S}$ is given by the limit as
$$
\nu({\cal S})=\lim_{R\to\infty}\frac{1}{\pi R^2}\sum_i 1_{|z_i|\le R}~.
$$
\item We define a function $g(z)$ as
$$
g_i(z)=\frac{|z-z_i|^{-\alpha}}{\sum_{j}|z-z_j|^{-\alpha}}~,
$$ 
where \mbox{$\alpha>2$}. The function $g_i(z)$ is similar to the SIR function $S_i(z)$ except that the summation in the denominator factor also includes the numerator factor. In order to simplify the notations, we will remove the reference to $z$ when no ambiguity is possible. 
\item We define a function $f(g_i)$ as 
$$
f(g_i)=1_{g_i(z)\geq \beta'}~,
$$ 
for some given $\beta'$. We consider without loss of generality that the value of $\beta'$ is given by \mbox{$\beta'=\frac{\beta}{\beta+1}$}. Therefore, if transmitter $i$ is received successfully at location $z$ ({\it i.e.} with SIR at least equal to $\beta$), then \mbox{$g_i(z)\geq \beta'$} and $f(g_i)$ is equal to $1$.
\item Reception area of an arbitrary emitter $i$ is given by 
$$
\ss_i=\int f(g_i)dz^2~.
$$
Note that the integration is over the plane and the notations are simplified by taking $dxdy$ equal to $dz^2$. 
\item We define a function $h(z)$ as
$$
h(z)=\sum_{i\in {\cal S}}{f(g_i)}~,
$$ 
and it is equal to the number of emitters which can be successfully received at $z$. Note that if \mbox{$\beta>1$}, the maximum value of $h(z)$ is $1$. 
\item We also define $\E(h(z))$ by the limit as
\begin{align*}
\E(h(z))&=\lim_{R\to\infty}\frac{1}{\pi R^2}\int_{|z|\le R}h(z)dz^2~,\\
&=\lim_{R\to\infty}\frac{1}{\pi R^2}\sum_i 1_{|z_i|\le R}\ss_i=\nu({\cal S})\E(\ss_i)~,
\end{align*}
with
$$
\E(\ss_i)=\lim_{n\to\infty}\frac{1}{n}\sum_{i\le n}\ss_i~,
$$
where $n$ is the number of emitters on a disk of radius $R$ centered at $(0,0)$. As $R$ approaches infinity (because of an infinite plane), $n$ approaches infinity. Our objective is to find the spatial distribution of the set ${\cal S}$ which optimizes $\E(h(z))$. Note that $\E(h(z))$ is equivalent to $\E(N(z,\beta,\alpha))$ and local capacity given by expressions (\ref{eq:avg_no}) and (\ref{eq:poisson_hand_over_no}) respectively. 
\end{compactenum}

\subsection{First Order Differentiation}
\label{sec:grid_first_order}

Let us denote the operator of differentiation w.r.t. $z_i$ by $\nabla_i$. For \mbox{$i\neq j$}, we have 
$$
\nabla_i g_j=\alpha g_ig_j\frac{z-z_i}{|z-z_i|^2}
$$ 
and 
$$
\nabla_i g_i=\alpha(g_i^2-g_i)\frac{z-z_i}{|z-z_i|^2}~.
$$ 
Therefore
\begin{align}
\nabla_i h(z)&=\nabla_i\sum_if(g_i)=f'(g_i)\nabla_ig_i+\sum_{j\neq i}f'(g_j)\nabla_ig_j \notag \\
&=\alpha g_i\frac{z-z_i}{|z-z_i|^2}\Big(-f'(g_i)+\sum_j g_jf'(g_j)\Big)~. \notag
\end{align}
We know that \mbox{$\int h(z)dz^2=\infty$}, we nevertheless have a finite $\nabla_i \int h(z)dz^2$. In other words, the sum $\sum_{j}\nabla_i\ss_j$ converges for all $i$. For all $j$ in ${\cal S}$, \mbox{$\sum_{i}\nabla_i\ss_j=0$}. Indeed this would be the differentiation of $\ss_j$ when all points in ${\cal S}$ are translated by the same vector. Similarly, 
$$
\sum_{i}\nabla_i\int  h(z)dz^2=0~.
$$ 

\begin{theorem}
If the points in the set ${\cal S}$ are arranged in a grid pattern then 
$
\nabla_i \int h(z)dz^2=\sum_{j}\nabla_i\ss_j=0,
$ 
and grids patterns are {\em locally optimal}. 
\end{theorem}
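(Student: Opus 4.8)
The plan is to sidestep any direct manipulation of the boundary (surface-delta) terms produced by $f'$, and instead exploit the point symmetry that every grid possesses about each of its sites. The structural fact I would lean on is that although $F(\mathcal{S}):=\int h(z)\,dz^2$ is itself infinite, its response to a \emph{localized} perturbation is finite: displacing a single emitter $z_i$ alters $h$ only near $z_i$ in an integrable way, which is exactly the assertion already made in the text that $\sum_j\nabla_i\sigma_j$ converges. I would therefore work throughout with the finite increment
$$
\Delta(\epsilon):=\int\big(h_{\mathcal{S}_\epsilon}(z)-h_{\mathcal{S}_0}(z)\big)\,dz^2,
$$
where $\mathcal{S}_\epsilon$ is the configuration with $z_i$ replaced by $z_i+\epsilon$ and all other sites left fixed, and observe that $\nabla_i\!\int h\,dz^2$ is by definition the gradient of $\Delta$ at $\epsilon=0$.

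First I would record the isometry invariance of the integrand: since each $g_j(z)$, and hence $f(g_j)$ and $h(z)$, depends only on the mutual distances $|z-z_j|$, for any isometry $\rho$ one has $h_{\rho(\mathcal{S})}(\rho(z))=h_{\mathcal{S}}(z)$, so a change of variables $z\mapsto\rho(z)$ (Jacobian $1$) leaves $\Delta$ unchanged. Next I would invoke the symmetry of the grid about $z_i$: in every case the lattice is invariant under a rotation $\rho$ with $\rho(z_i)=z_i$ and linear part $A$ a rotation through an angle $\theta\not\equiv 0\pmod{2\pi}$ ($\theta=90^\circ$ for the square lattice, $60^\circ$ for the triangular, $120^\circ$ for the hexagonal, and in any case the inversion $\theta=180^\circ$). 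Applying $\rho$ to $\mathcal{S}_\epsilon$ permutes the fixed sites among themselves and sends the displaced site $z_i+\epsilon$ to $z_i+A\epsilon$, so $\rho(\mathcal{S}_\epsilon)=\mathcal{S}_{A\epsilon}$, and combining this with the invariance above gives $\Delta(A\epsilon)=\Delta(\epsilon)$ for all small $\epsilon$.

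From here the conclusion is immediate. Differentiating $\Delta(A\epsilon)=\Delta(\epsilon)$ at $\epsilon=0$ yields $A^{T}\,\nabla_i\!\int h\,dz^2=\nabla_i\!\int h\,dz^2$, so the gradient vector is fixed by a planar rotation through a nonzero angle; the only such vector is $0$, giving $\nabla_i\!\int h\,dz^2=\sum_j\nabla_i\sigma_j=0$ at every site $i$. This is exactly the first-order stationarity condition, so every grid is a critical configuration of the local-capacity functional $\E(h(z))$; I would close by noting that it is this vanishing of the first variation that the statement calls local optimality, with the verification that the critical point is genuinely a maximum left to the second-order computation of the following section.

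The step I expect to be the real obstacle is the very first one: making rigorous the passage from the divergent $F=\int h\,dz^2$ to the well-defined finite increment $\Delta(\epsilon)$, and justifying that the change of variables can be performed termwise so that the isometry invariance survives after summation. This is precisely where the convergence of $\sum_j\nabla_i\sigma_j$ does the work; once the increments are known to be finite and isometry-invariant, the symmetry argument forcing the gradient to vanish is essentially free.
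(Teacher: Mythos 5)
Your proof is correct, but it reaches the conclusion by a genuinely different symmetry argument than the paper. The paper's proof is purely translational: since every site of a grid is equivalent to every other, $\nabla_i\int h(z)\,dz^2=\sum_j\nabla_i\ss_j$ is the \emph{same} vector for every $i$, and since the paper has already established $\sum_i\nabla_i\int h(z)\,dz^2=0$ (global translation invariance), that common vector must vanish. Your proof instead works site by site, using the point-group symmetry of the grid about $z_i$: the finite increment $\Delta(\epsilon)$ satisfies $\Delta(A\epsilon)=\Delta(\epsilon)$ for a nontrivial rotation $A$ fixing $z_i$, so the gradient is a rotation-fixed vector and hence zero. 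Each approach buys something. The paper's argument is one line once the global identity is in hand, but it strictly requires all sites to be translates of one another; this holds for the square and triangular grids, but \emph{not} for the hexagonal (honeycomb) grid, whose two sublattices are related only by a rotation --- the paper itself admits this in Section 5 (``modulo a translation \emph{and a rotation} for the hexagonal pattern''), so its proof silently needs exactly the rotational supplement you supply. Your argument covers the honeycomb directly (via its $120^\circ$ site symmetry), needs no global sum identity, and extends to any Bravais lattice through the $180^\circ$ inversion. One small error in your write-up: the parenthetical claim that inversion ($\theta=180^\circ$) is a symmetry ``in any case'' is false for the hexagonal grid --- a honeycomb site has only $C_3$ symmetry, and inversion about a site maps its three neighbors to non-sites; this does not damage your proof, since you correctly invoke $120^\circ$ there, but the parenthetical should be deleted or restricted to Bravais lattices. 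Both proofs share the same unaddressed technical point, which you at least name explicitly: the finiteness and differentiability of the localized increment, i.e., the convergence of $\sum_j\nabla_i\ss_j$, which the paper likewise asserts without proof.
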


\begin{proof}
If ${\cal S}$ is a set of points arranged in a grid pattern, then: \mbox{$\nabla_i \int h(z)dz^2=\sum_{j}\nabla_i\ss_j$} would be identical for all $i$ and, therefore, would be null since \mbox{$\sum_{i}\nabla_i\int  h(z)dz^2=0$}. 

We could erroneously conclude that,
\begin{compactitem}[-]
\item all grid sets are optimal and
\item all grid sets give the same $\E(h(z))$.
\end{compactitem}
In fact this is wrong: we could also conclude that $\E(\ss_i)$ does not vary but this will contradict that $\nu({\cal S})$ {\em must} vary. The reason of this error is that a grid set cannot be modified into another grid set with a {\em uniformly bounded transformation}, unless the two grid sets are translated by a simple vector. 
\end{proof}

\subsection{Numerical Differentiation of First and Second Order}
\label{sec:grid_second_order}

In order to prove that grid patterns are also {\em locally maximum}, we must show that
$
\nabla_i^2 \int h(z) dz^2 < 0
$.
Obviously, analytical formulation to prove this property is very challenging. However, we can develop a numerical differentiation technique to show that this is true in case of grid patterns. 

We know that \mbox{$\int h(z)dz^2=\infty$}, and we are only interested in the behavior of $\nabla_i \int h(z)dz^2$ and $\nabla_i^2 \int h(z)dz^2$. Therefore, we define a function $U$ as
\begin{equation}
U=\int_{\cal A} h(z) dz^2~,
\label{eq:capa}
\end{equation}
where the integration is over a large area, ${\cal A}$, with location of emitter $i$, $z_i$, at the center of this area. Note that the area, ${\cal A}$, is large enough so that a slight perturbation in the location of $z_i$ may have infinitesimal effect on the reception areas of the emitters near the edges of ${\cal A}$. Let $U_{i,j}$ denote the result of the integration in \eqref{eq:capa} and we denote the differentiation of $U$ w.r.t. $z_i$ as
$$
\nabla_i U=\Big(\frac{\partial}{\partial x_i}U,\frac{\partial}{\partial y_i}U\Big)=(U_x,U_y)~.
$$
We will compute $\nabla_i U$ and $\nabla_i^2 U$ with slight perturbation in $z_i$ and use the following method. 

Using central difference equations, we can write
$
U_x\approx \frac{1}{2\Delta x}(U_{i+1,j} -U_{i-1,j})
$,
and
$
U_y\approx \frac{1}{2\Delta y}(U_{i,j+1} -U_{i,j-1})
$,
where $\Delta x$ and $\Delta y$ are slight perturbations in the location of $z_i$ along $x-$axis and $y-$axis respectively. The value of $U_{i+1,j}$ is computed from \eqref{eq:capa} with perturbed position of emitter $i$ given by $z_i'=(x_i+\Delta x,y_i)$, and this also applies to other notations.

Similarly, the second order partial derivatives can be written as
$
U_{xx}\approx \frac{1}{\Delta x^2}(U_{i+1,j}-2U_{i,j}+U_{i-1,j})
$,
$
U_{yy}\approx \frac{1}{\Delta y^2}(U_{i,j+1}-2U_{i,j}+U_{i,j-1})
$,
and mixed derivative can be written as
$
U_{xy}=U_{yx}\approx \frac{1}{4\Delta x \Delta y}(U_{i+1,j+1}-U_{i+1,j-1}-U_{i-1,j+1}+U_{i-1,j-1})
$. 

In order to prove that grid patterns are locally maximum, we will analyze the eigenvalues of the $2\times2$ {\em Hessian Matrix},
$$
H(U)=\Big[\begin{array}{cc}
U_{xx} & U_{xy}\\
U_{yx} & U_{yy}\end{array}\Big]~.
$$
In other words, we can prove that the grid patterns are locally maximum if, in this case, we can show that the determinant, $|H(U)|$, is greater than zero and \mbox{$U_{xx}<0$}. 

We will introduce the analytical method to compute $U$ and its first order and second order partial derivatives in \S \ref{sec:rx_area_2} and present the results of numerical differentiation in \S \ref{sec:sub_opti}. 

We prove that the grid sets are locally optimal and maximum within sets that can be uniformly transformed between each other. In order to cope with uniform transformation and to be able to transform a grid set to another grid set, we will introduce the linear group transformation. 

\subsection{Linear Group Transformation}

Here, we assume that the points in the plane are modified according to a continuous linear transform $M(t)$ where $\BM{\bf (t)}$ is a matrix with \mbox{$\BM{\bf (0)}=\BI$}, {\it e.g.}, \mbox{$\BM{\bf (t)}=\BI+{\bf t}\BA$} where $\BA$ is a matrix. Without loss of generality, we only consider $\ss_0$, {\it i.e.}, the reception area of the emitter at $z_0$ which can be located anywhere on the plane. Under these assumptions, we have
$$
\frac{\partial}{\partial t}\ss_0=\sum_i (\BA z_i.\nabla_i\ss_0)=\tr\Big(\sum_{i} \BA^Tz_i\otimes \nabla_i\ss_0\Big)~.
$$

In other words, using the identity \mbox{$\frac{\partial \tr(\BA^T\BB)}{\partial \BA}=\BB$}, the derivative of $\ss_0$ w.r.t. matrix $\BA$ is exactly equal to \mbox{$\BD=\sum_{i} z_i\otimes \nabla_i\ss_0$}, 
such that
$
\BD=\left[
\begin{array}{cc}
D_{xx}&D_{xy}\\
D_{yx}&D_{yy}
\end{array}
\right]
$. 

We can write the following identity
$$
\tr\Big(\BA^T \frac{\partial}{\partial \BA}\ss_0\Big)=\frac{\partial}{\partial t}\ss_0(t,\BA)\Big|_{t=0}~,
$$
where $\ss_0(t,\BA)$ is the transformation of $\ss_0$ under $M(t)$, {\it i.e.}, \mbox{$\ss_0(t,\BA)=\det(\BI+\BA t)\ss_0$}. We assume that {$\BM{\bf (t)}=(1+t)\BI$} with \mbox{$\BA=\BI$}, {\it i.e.}, the linear transform is homothetic.

\begin{theorem}
$\BD$ is symmetric and $\tr(\BD)=2\ss_0$.
\end{theorem}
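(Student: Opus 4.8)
The plan is to obtain both claims from the single identity
$$
\tr\big(\BA^T\BD\big)=\frac{\partial}{\partial t}\ss_0(t,\BA)\Big|_{t=0}
$$
established just above, by substituting the two generators of the planar similarity group for $\BA$ and reading off the linear constraints they impose on the entries of $\BD$. The conceptual point is that the transformation law $\ss_0(t,\BA)=\det(\BI+\BA t)\ss_0$ holds \emph{exactly} whenever $\BI+\BA t$ is a similarity, i.e. a scalar multiple of a rotation: such a map rescales every distance by one common factor, leaves the dimensionless ratio $g_0$ invariant, and therefore sends the reception region to its own image, whose area is multiplied by $\det(\BI+\BA t)$.

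For the trace I would take $\BA=\BI$, so that $M(t)=(1+t)\BI$ is a pure homothety. Since $g_0$ is scale invariant, the reception region is merely dilated by $(1+t)$ along each axis, giving $\ss_0(t,\BI)=\det((1+t)\BI)\,\ss_0=(1+t)^2\ss_0$ exactly. Differentiating at $t=0$, the right-hand side of the identity becomes $2\ss_0$, while its left-hand side is $\tr(\BI^T\BD)=\tr(\BD)$; hence $\tr(\BD)=2\ss_0$.

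For the symmetry I would take the rotation generator $\BA=\left[\begin{smallmatrix}0&-1\\1&0\end{smallmatrix}\right]$, so that $M(t)=\BI+\BA t=\left[\begin{smallmatrix}1&-t\\t&1\end{smallmatrix}\right]$ is a scaled rotation, still a similarity, with $\det(\BI+\BA t)=1+t^2$. Then $\ss_0(t,\BA)=(1+t^2)\ss_0$, whose $t$-derivative vanishes at $t=0$. Because $\BA$ is antisymmetric, $\tr(\BA^T\BD)$ picks out exactly the antisymmetric part of $\BD$; a direct $2\times2$ multiplication gives $\tr(\BA^T\BD)=D_{yx}-D_{xy}$. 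Equating this to zero forces $D_{xy}=D_{yx}$, which is the asserted symmetry of $\BD$.

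The only delicate point -- and the step I would expect to be the main obstacle -- is justifying the determinant scaling law itself. It is \emph{false} for a general $\BA$: an anisotropic stretch or a shear distorts distances unequally, so $g_0$ is not preserved and the reception region deforms in a manner no single determinant can record. What makes the proof go through is that the two directions we actually differentiate along, the dilation $\BI$ and the rotation generator, are precisely the similarity directions on which the law is exact, so no error is incurred. Before differentiating I would also check that $\BD$ is well defined, namely that the series $\sum_i z_i\otimes\nabla_i\ss_0$ converges and, using the translation identity $\sum_i\nabla_i\ss_0=0$ noted earlier, that it is independent of the choice of origin; this is what legitimizes interchanging the derivative with the infinite sum.
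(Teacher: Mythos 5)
Your proposal is correct and follows essentially the same route as the paper: specializing $\tr(\BA^T\BD)=\frac{\partial}{\partial t}\ss_0(t,\BA)\big|_{t=0}$ to the homothety $\BA=\BI$ for the trace identity, and to the rotation generator $\BJ$ (using rotation invariance of the reception area) to kill the antisymmetric part $D_{yx}-D_{xy}$. Your added observations --- that the determinant scaling law is exact only along similarity directions, and that $\BI+t\BJ$ is a scaled rotation giving $\ss_0(t,\BJ)=(1+t^2)\ss_0$ with vanishing derivative --- sharpen the paper's more terse justification but do not change the argument.
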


\begin{proof}
Under the given transform, $\ss_0(t,\BA)=\ss_0(t,\BI)=(1+t)^2\ss_0$. As a first property, we have \mbox{$\tr(\BD)=2\ss_0$}, since the derivative of $\ss_0$ w.r.t. identity matrix $\BI$ is exactly $2\sigma_0$ ({\it i.e.}, {$\tr(\BA^T\BD)=\tr(\BD)=\sigma_0'(0,\BI)=2\ss_0$}). The second property that $\BD$ is a symmetric matrix is not obvious. The easiest proof of this property is to consider the derivative of $\ss_0$ w.r.t. matrix 
\mbox{$
\BJ=\left[
\begin{array}{cc}
0&-1\\
1&0
\end{array}
\right]
$},
which is zero since $\BJ$ is the initial derivative for a rotation and reception area is invariant by rotation. 
Therefore, \mbox{$\tr(\BJ^T\BD)=D_{yx}-D_{xy}=0$}, which implies that $\BD$ is symmetric. 
\end{proof}

Note that $\BD$ can also be written in the following form
$$
\BD=\sum\limits_{i}z_i\otimes\nabla_i\ss_0=\int dz^{2}\sum\limits_{i}z_i\otimes\nabla_if(g_0)~.
$$
Let $\BT$ be defined as
$
\BT=\int dz^2\sum_i (z-z_i)\otimes\nabla_i f(g_0)~,
$
such that
$
\BD=\int \sum_{i} z\otimes \nabla_i f(g_0)dz^2 - \BT~.
$
The purpose of these definitions will become evident from theorems $3$ and $4$.

\begin{theorem}
We will show that $\int \sum_{i} z\otimes \nabla_i f(g_0)dz^2$ is equal to $\ss_0\BI$ and, therefore, $\BD=\ss_0\BI-\BT$. We will also prove that $\BT$ is symmetric. 
\end{theorem}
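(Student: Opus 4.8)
The plan is to handle the two assertions in turn, both flowing from the structure of $\nabla_i f(g_0)$.

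For the identity $\int\sum_i z\otimes\nabla_i f(g_0)\,dz^2=\ss_0\BI$, I would first exploit translation invariance. Translating every emitter by a common vector $v$, i.e. $z_j\mapsto z_j+v$, replaces $g_0(z)$ by $g_0(z-v)$; differentiating this equality at $v=0$ yields $\sum_i\nabla_i f(g_0)=-\nabla_z f(g_0)$, where $\nabla_z$ denotes the gradient in the field point $z$. Since $z$ itself does not depend on any $z_i$, I would pull it out of the sum so that the integrand becomes $-\,z\otimes\nabla_z f(g_0)$, and then integrate by parts componentwise: the $(a,b)$ entry is $-\int x_a\,\partial_{x_b}f(g_0)\,dz^2$, and transferring the derivative onto $x_a$ produces the factor $\partial_{x_b}x_a=\delta_{ab}$, so the entry equals $\delta_{ab}\int f(g_0)\,dz^2=\delta_{ab}\ss_0$. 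This is exactly $\ss_0\BI$. Since, by definition, $\BD=\int\sum_i z\otimes\nabla_i f(g_0)\,dz^2-\BT$, the relation $\BD=\ss_0\BI-\BT$ drops out immediately.

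For the symmetry of $\BT$ I would argue directly rather than appeal to $\BD$. Writing $\nabla_i f(g_0)=f'(g_0)\nabla_i g_0$ and inspecting the first-order formulas established above, in every case --- whether $i=0$ or $i\neq0$ --- the vector $\nabla_i g_0$ is a scalar multiple of $\dfrac{z-z_i}{|z-z_i|^2}$. Hence each summand of $\BT$ has the form
$$
(z-z_i)\otimes\nabla_i f(g_0)=f'(g_0)\,c_i(z)\,\frac{(z-z_i)\otimes(z-z_i)}{|z-z_i|^2}~,
$$
a scalar multiple of the self-outer-product $(z-z_i)\otimes(z-z_i)$, which is manifestly symmetric. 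Symmetry is preserved by the sum over $i$ and by the integral over $z$, so $\BT$ is symmetric. As a consistency check, one may instead note $\BT=\ss_0\BI-\BD$ and invoke the symmetry of $\BD$ from the previous theorem together with that of $\ss_0\BI$.

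The delicate points are analytic rather than algebraic. Because $f=1_{g_0\ge\beta'}$ is an indicator, $f'$ is a Dirac mass carried by the level set $\{g_0=\beta'\}$, so every statement involving $f'(g_0)$ is really a surface-integral (distributional) identity along the boundary of the reception region, and I would make this precise before differentiating. The integration by parts also demands that the boundary term $[\,x_a f(g_0)\,]$ vanish at infinity, which it does because the reception area $\ss_0$ is finite and hence $f(g_0)$ has bounded support in $z$. Granting these two facts, the remaining computation is routine; justifying them cleanly is the main obstacle.
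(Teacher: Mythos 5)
Your proposal is correct and follows essentially the same route as the paper: translation invariance giving $\sum_i\nabla_i f(g_0)=-\nabla_z f(g_0)$ followed by componentwise integration by parts for the identity $\int\sum_i z\otimes\nabla_i f(g_0)\,dz^2=\ss_0\BI$, and the observation that each summand of $\BT$ is a scalar multiple of the symmetric outer product $(z-z_i)\otimes(z-z_i)$ for the symmetry claim. Your added remarks on the distributional nature of $f'$ and the vanishing of boundary terms address analytic points the paper leaves implicit, but they do not change the argument.
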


\begin{proof}
From the definition of $\BT$, we can see that the sum \mbox{$\sum_i (z-z_i)\otimes \nabla_i f(g_0)$} leads to a symmetric matrix since
\begin{eqnarray*}
\BT&=&\alpha\int f'(g_0)\Big(\frac{g_0^2-g_0}{|z-z_0|^2}(z-z_0)\otimes(z-z_0)+\\
&&\sum_{i\neq 0}\frac{g_0g_i}{|z-z_i|^2}(z-z_i)\otimes(z-z_i)\Big)dz^2~,
\end{eqnarray*}
and the left hand side is made of \mbox{$(z-z_i)\otimes(z-z_i)$} which are symmetric matrices. This implies that $\BT$ is also symmetric.

We can see that \mbox{$\sum_{i}\nabla_i f(g_0)=-\nabla f(g_0)$}, and using integration by parts we have
\begin{eqnarray*}
\lefteqn{\int \sum_i z\otimes \nabla_i f(g_0)dz^2=-1\times}\\
&\Biggl[
\begin{array}{cc}
\int x \fpx f(g_0)dxdy&\int x \fpy f(g_0)dxdy\\
\int y \fpx f(g_0)dxdy&\int y \fpy f(g_0)dxdy
\end{array}
\Biggr]=
\Biggl[\begin{array}{cc}
\ss_0&0\\
0&\ss_0
\end{array}\Biggr]~,
\end{eqnarray*}
which is symmetric and equal to $\ss_0\BI$. The sum/difference of symmetric matrices is also a symmetric matrix and, therefore, $\BD$ is a symmetric matrix and $\BD=\ss_0\BI-\BT$. 
\end{proof}


Now, we will only consider grid patterns and, by virtue of a grid pattern, we can have 
$$
\E(\ss_i)=\ss_0=\int f(g_0)dz^2~,
$$ 
and \mbox{$\E(h(z))=\nu({\cal S})\ss_0$}. Under homothetic transformation, $\nu({\cal S})$ and $\ss_0$ are transformed but $\nu({\cal S})\ss_0$ remains invariant.

\begin{theorem}
If the pattern of the points in set ${\cal S}$ is optimal w.r.t. linear transformation of the set, $\BD=\ss_0\BI$ and $\BT=0$.
\end{theorem}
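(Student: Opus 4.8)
The plan is to read "optimal w.r.t. linear transformation'' as the requirement that the first derivative of the local capacity $\E(h(z))=\nu({\cal S})\ss_0$ vanish, at $t=0$, along every one-parameter family $\BM(t)=\BI+t\BA$, and then to extract $\BD=\ss_0\BI$ from the resulting trace identity. The point I would stress at the outset is that it is the \emph{product} $\nu({\cal S})\ss_0$ that is being made stationary, not $\ss_0$ alone: demanding stationarity of $\ss_0$ by itself would force $\BD=0$, in direct conflict with $\tr(\BD)=2\ss_0\neq 0$ (established above). Hence the density factor must be differentiated along with the area, and its variation is precisely what supplies the correct condition.

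The first step is to record the two transformation laws at $t=0$. A linear map $\BM(t)$ carries a region of area $a$ to one of area $\det(\BM(t))\,a$ while preserving the number of points it contains, so the transformed density is $\nu({\cal S})/\det(\BI+t\BA)$; from $\det(\BI+t\BA)=1+t\,\tr(\BA)+O(t^2)$ its derivative at $t=0$ is $-\tr(\BA)\,\nu({\cal S})$. For the reception area I would use the defining property of $\BD$ derived earlier, $\frac{\partial}{\partial t}\ss_0(t,\BA)\big|_{t=0}=\tr(\BA^T\BD)$, rather than the homothetic identity $\ss_0(t,\BA)=\det(\BI+\BA t)\ss_0$, which holds only along $\BA=\BI$. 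The product rule then gives
\[
\frac{\partial}{\partial t}\big(\nu({\cal S})\,\ss_0\big)\Big|_{t=0}
=\nu({\cal S})\big(\tr(\BA^T\BD)-\ss_0\,\tr(\BA)\big).
\]

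Writing $\ss_0\,\tr(\BA)=\tr(\BA^T\,\ss_0\BI)$ and imposing stationarity for \emph{every} matrix $\BA$ turns the condition into $\tr\!\big(\BA^T(\BD-\ss_0\BI)\big)=0$ for all $\BA$. Since the Frobenius pairing $(\BA,\BX)\mapsto\tr(\BA^T\BX)$ is nondegenerate on $2\times2$ matrices, this forces $\BD=\ss_0\BI$; combining with the decomposition $\BD=\ss_0\BI-\BT$ (established above) immediately yields $\BT=0$, which is the claim. As a consistency check I would note that the homothetic direction $\BA=\BI$ only reproduces $\tr(\BD)-2\ss_0=0$ and carries no extra information—the capacity is already invariant under homothety—so all genuine content resides in the traceless, shape-changing directions. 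The only real obstacle here is bookkeeping: one must keep the density rescaling $\det(\BI+t\BA)^{-1}$ coupled to the area variation so that the two $\tr(\BA)$ contributions are balanced against each other; once that is handled correctly, the passage from "holds for all $\BA$'' to $\BD=\ss_0\BI$ is routine.
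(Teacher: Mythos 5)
Your proof is correct and follows essentially the same route as the paper: differentiate the product $\nu({\cal S})\ss_0$ along $\BM{\bf (t)}=\BI+t\BA$, using the matrix derivative $\BD$ for the area and $-\nu({\cal S})\BI$ for the density, then conclude $\BD=\ss_0\BI$ and hence $\BT=0$ from the decomposition $\BD=\ss_0\BI-\BT$. Your explicit appeal to the nondegeneracy of the Frobenius pairing (and your avoidance of the identity $\ss_0(t,\BA)=\det(\BI+\BA t)\ss_0$ for general $\BA$) merely makes rigorous what the paper writes directly as the matrix-derivative equation $\nu({\cal S})\BD-\BI\nu({\cal S})\ss_0=0$.
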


\begin{proof}
The derivative of $\ss_0$ w.r.t. matrix $\BA$ is exactly equal to $\BD$. 
Similarly, under the same transformation
$
\frac{\partial}{\partial t}\nu({\cal S})=\frac{1}{\det(\BI+\BA t)}\nu({\cal S})
$,
and for $\BA=\BI$, it can be written as $\nu'({\cal S})(t,\BI)=\nu({\cal S})/(1+t)^2$.

In any case, the derivative of $\nu({\cal S})$ w.r.t. matrix $\BA$ is exactly equal to $-\BI\nu({\cal S})$. 
We also know that if the pattern is optimal w.r.t. linear transformation, the derivative of $\nu({\cal S})\ss_0$ w.r.t. to matrix $\BA$ shall be null. This implies that
$
\nu({\cal S})\BD-\BI\nu({\cal S})\ss_0=0~,
$
which leads to $\BD=\ss_0\BI$ and $\BT=0$.
\end{proof}

\begin{figure}[!t]
\centering
\psfrag{a}{$\sqrt{3}d$}
\psfrag{b}{$2d$}
\psfrag{c}{$d$}
\includegraphics[scale=0.65]{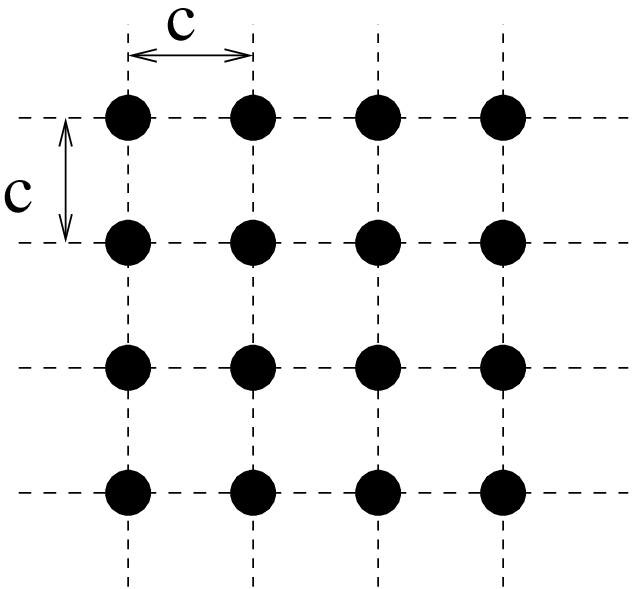}
\includegraphics[scale=0.65]{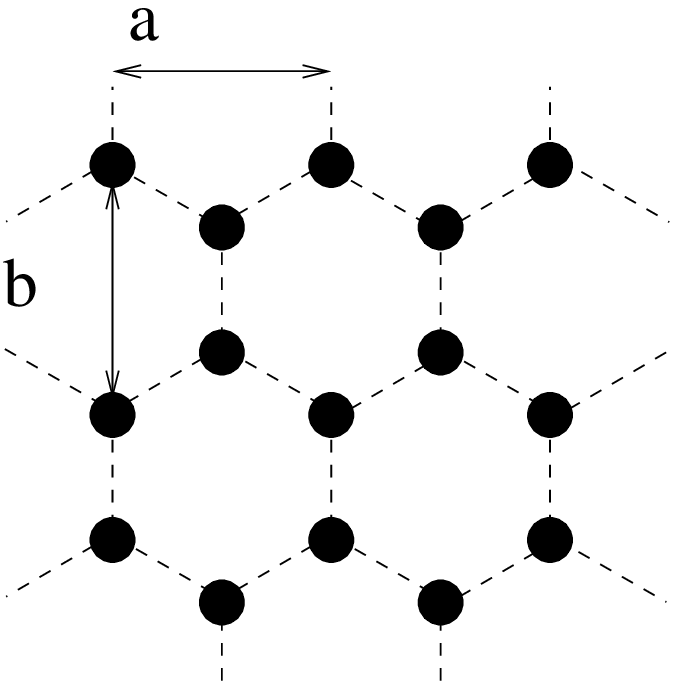}
\includegraphics[scale=0.65]{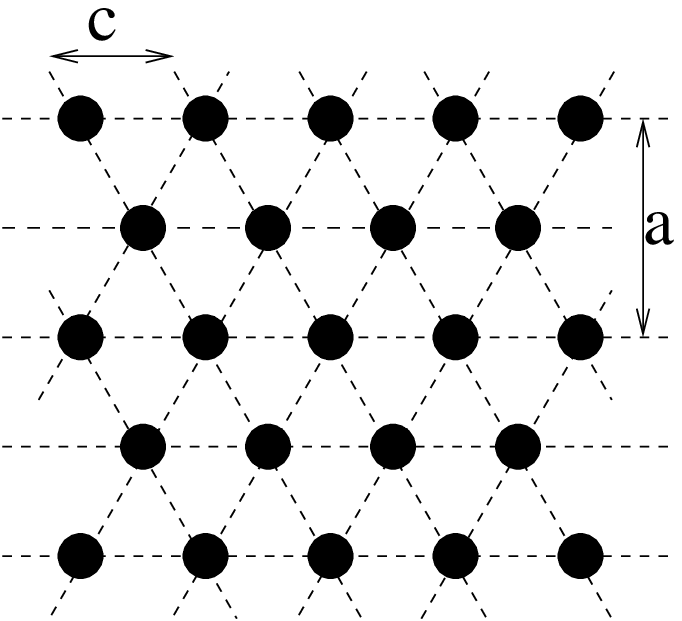}
\caption{Square, Hexagonal and Triangular grids.}
\label{fig:grid_layouts}
\end{figure}

We know that $\BT$ is symmetric and $\BT=0$. Thus, \mbox{$\tr(\BT)=0$}, {\it i.e.}, Eigen values are invariant by rotation. When a grid is optimal, we must have \mbox{$\BT=0$} and the matrix $\BT$ must be invariant w.r.t. isometric symmetries of the grid. On $2D$ plane, the grid patterns which satisfy this condition are square, hexagonal and triangular grids only. The square grid is symmetric w.r.t. any horizontal or vertical axes of the grid and, in particular, with rotation of $\pi/2$ represented by $\BJ$. Therefore, the {\em Eigen system} must be invariant by rotation of $\pi/2$. This implies that the {\em Eigen values} are the same and therefore null since \mbox{$\tr(\BT)=0$}. Same argument also applies for the hexagonal grid with the invariance for $\pi/3$ rotation and for the triangular pattern with invariance for $2\pi/3$ rotation. 

\section{Local Capacity of Grid Pattern Schemes}
\label{sec:rx_area_2}

We consider that the set ${\cal S}$ is a set of points arranged in a grid pattern and, for every slot, the grid pattern is the same {\em modulo} a translation. We have covered grid layouts of square, hexagonal and triangle which are also shown in Fig. \ref{fig:grid_layouts}. Grids are constructed from $d$ which defines the minimum distance in-between neighboring emitters and can be derived from the hop-distance parameter of a typical TDMA-based protocol. The density of grid points, $\lambda$, depends on $d$. However, the local capacity is independent of the value of $d$ or $\lambda$ as it is invariant for any homothetic transformation of the set of emitters. 

Our aim is to compute the size of the reception area, $A_i(\lambda,\beta,\alpha)$, around each emitter $i$. By consequence of the regular grid pattern, all reception areas are the same {\em modulo} a translation (and a rotation for the hexagonal pattern), and their surface area size, $\sigma(\lambda,\beta,\alpha)$, is the same.  If ${\cal C}_i(\beta,\alpha)$ is the closed curve that forms the boundary of $A_i(\lambda,\beta,\alpha)$ and $z$ is a point on ${\cal C}_i(\beta,\alpha)$, we have
\begin{equation}
\sigma(\lambda, \beta,\alpha)=\frac{1}{2}\displaystyle\int\limits_{{\cal C}_i(\beta,\alpha)}\det(z-z_{i},dz)~,
\label{eq:area_integral}
\end{equation}
where $\det(a,b)$ is the determinant of vectors $a$ and $b$ and $dz$ is the vector tangent to ${\cal C}_i(\beta,\alpha)$ at point $z$. \mbox{$\det(z-z_i,dz)$} is the cross product of vectors $(z-z_{i})$ and $dz$ and gives the area of the parallelogram formed by these two vectors. Equation (\ref{eq:area_integral}) remains true if $z_{i}$ is replaced by any interior point of $A_i(\lambda,\beta,\alpha)$. The SIR $S_{i}(z)$ of emitter $i$ at point $z$ is given by (\ref{eq:sinr}). We assume that at point $z$, $S_{i}(z)=\beta$. On point $z$ we can also define the gradient of $S_{i}(z)$, $\nabla S_{i}(z)$.
$\nabla S_i(z)$ is inward normal to the curve ${\cal C}_i(\beta,\alpha)$ and points towards $z_i$. The vector $dz$ is co-linear with $J\frac{\nabla S_{i}(z)}{|\nabla S_{i}(z)|}$ where $J$ is the anti-clockwise rotation of $3\pi/2$ (or clockwise rotation of $\pi/2$) given by
$
J=\left[\begin{array}{cc}
0 & 1\\
-1 & 0\end{array}\right]
$. 
Therefore, we can fix $dz=J\frac{\nabla S_{i}(z)}{|\nabla S_{i}(z)|}\Delta t$ and in (\ref{eq:area_integral}) 
\begin{align*}
\det(z-z_{i},dz)&=(z-z_i)\times J\frac{\nabla S_{i}(z)}{|\nabla S_{i}(z)|}\Delta t\\
&=-(z-z_{i}).\frac{\nabla S_{i}(z)}{|\nabla S_{i}(z)|}\Delta t~,
\end{align*}
where $\Delta t$ is assumed to be a small step size. The sequence of points $z(k)$ computed as 
\begin{align*}
z(0) & =z\\
z(k+1) & =z(k)+J\frac{\nabla S_{i}(z(k))}{|\nabla S_{i}(z(k))|}\Delta t~,\end{align*} gives a discretized representation of ${\cal C}_i(\beta,\alpha)$. 
Therefore, (\ref{eq:area_integral}) reduces to
\begin{equation}
\sigma(\lambda, \beta,\alpha)\approx-\frac{1}{2}\sum_{k}(z(k)-z_{i}).\frac{\nabla S_{i}(z(k))}{|\nabla S_{i}(z(k))|}\Delta t~,
\label{eq:area_integral_2}
\end{equation}
assuming that we stop the sequence $z(k)$ when it loops back on or close to the point $z$. Figure \ref{fig:rx_area_method} is the figurative representation of the computation of the reception area of an emitter. 

\begin{figure}[!t]
\centering
\psfrag{a}{$z_i$}
\psfrag{b}{$z$}
\psfrag{c}{$dz=J\frac{\nabla S_{i}(z)}{|\nabla S_{i}(z)|}\Delta t$}
\psfrag{d}{${\cal C}_i(\beta,\alpha)$}
\psfrag{e}{$A_i(\lambda,\beta,\alpha)$}
\includegraphics[scale=0.65]{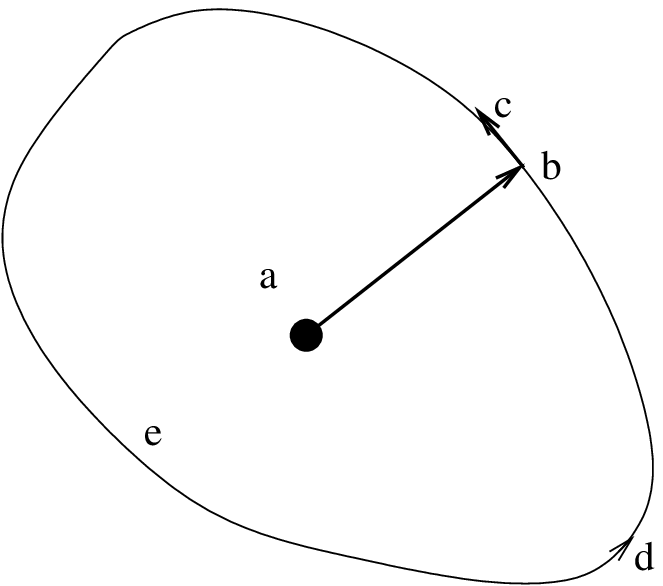}
\caption{Computation of the reception area of emitter $i$.\label{fig:snr_gradient}}
\label{fig:rx_area_method}
\end{figure}

The point, \mbox{$z(0)=z$}, can be found using Newton's method. First approximate value of $z$, required by Newton's method, can be computed assuming only one interferer nearest to the emitter $i$. The negative sign in (\ref{eq:area_integral_2}) is automatically negated by the dot product of vectors $(z(k)-z_{i})$ and $\nabla S_{i}(z(k))$.

The local capacity is given by
$$
c(z,\beta,\alpha)=\E(N(z,\beta,\alpha))=N(z,\beta,\alpha)=\lambda\sigma(\lambda,\beta,\alpha)~,
$$ 
where $\sigma(\lambda,\beta,\alpha)$ is computed using above described method. 

\section{Local Capacity of Slotted ALOHA Scheme}
\label{sec:aloha}

In slotted ALOHA scheme, the set of simultaneous emitters, in each slot, can be given by a uniform Poisson distribution of mean $\lambda$ emitters per unit square area~\cite{Jacquet:2009,SR-ALOHA,Weber2}. Therefore, using the results from~\cite{Jacquet:2009}, we can derive the analytical expression for the local capacity with slotted ALOHA scheme. 

The average size of the reception area around an arbitrary emitter is 
\begin{equation}
\sigma(\lambda,\beta,\alpha)=\frac{1}{\lambda}\frac{\sin(\frac{2}{\alpha}\pi)}{\frac{2}{\alpha}\pi}\beta^{-\frac{2}{\alpha}}~.
\label{eq:poisson_area}
\end{equation}


Therefore, the analytical expressions (\ref{eq:poisson_hand_over_no}) and (\ref{eq:poisson_area}) lead to
\begin{equation}
c(z,\beta,\alpha)=\lambda\sigma(\lambda,\beta,\alpha)=\sigma(1,\beta,\alpha)~.
\label{eq:poisson_capacity}
\end{equation}

\section{Evaluation}
\label{sec:evaluate}

In order to approach an infinite map, we perform numerical simulations in a very large network spread over $2D$ square map of $10000\times10000$ square meters. The emitters are spread over this network area in square, hexagonal or triangular pattern. For all grid patterns, \mbox{$d$} is set equal to $25$ meters although it will have no effect on the validity of our conclusions as local capacity, $c(z,\beta,\alpha)$, is independent of the value of $d$ or $\lambda$.

\subsection{Optimality of Grid Pattern Schemes}
\label{sec:sub_opti}

The area, ${\cal A}$, over which the function $U$, in \eqref{eq:capa}, is evaluated is in the center of the network area and is equal to \mbox{$2500\times2500$} square meters with $z_i$ located at its center: \mbox{$z_{i}=(x_{i},y_{i})=(0,0)$}. We set \mbox{$\beta=10.0$}, \mbox{$\alpha=4.0$} and \mbox{$\Delta x=\Delta y=0.1$}, such that the perturbation in the location of $z_i$ has an infinitesimal effect on the reception areas of the emitters on the borders of ${\cal A}$. Note that the function $U$ is numerically evaluated using the analytical method of \S \ref{sec:rx_area_2}. The first and second order partial derivatives of $U$ are computed according to \S \ref{sec:grid_second_order} and results, shown in Table \ref{tbl:comparison}, show that square, hexagonal and triangular grid patterns are locally maximum. Similar results can also be obtained for all values of $\beta$ and $\alpha$ with the same conclusions.

{\small 
\begin{table}[!t]
\begin{center}
\begin{tabular}{|c|c|c|c|c|c|c|}
\hline
     & $U_x$	& $U_y$	& $U_{xx}$	& $U_{xy}=$	& $U_{yy}$ 	& $|H(U)|$\\
     &		&		&		& $U_{yx}$	&		&		\\
   \hline

    Square   	  &  $0$ &   $0$  & $-.0052$ & $0$ & $-.0052$ & $.00002704$ \\
    \hline
    Hexagonal  &  $0$ &   $0$  & $-.0102$ & $0$ & $-.0102$ & $.00010404$ \\
    \hline
    Triangular   &  $0$ &   $0$  & $-.0041$ & $0$ & $-.0041$ & $0.0000168$ \\
       \hline
\end{tabular}
\end{center}
\caption{\footnotesize Numerical differentiation (first and second order partial derivatives) of local capacity of grid pattern schemes. $\beta=10.0$ and $\alpha=4.0$.}
\label{tbl:comparison}
\end{table}
}

\subsection{Local Capacity of grid patterns and slotted ALOHA}

In this case, we compute the size of the reception area of transmitter $i$, located in the center of the network area: \mbox{$z_{i}=(x_{i},y_{i})=(0,0)$}. The network area is large enough so that the reception area of transmitter $i$ is close to its reception area in an infinite map. $\lambda$ depends on the type of grid and it is computed from the total number of transmitters spreading over the network area. In case of slotted ALOHA, $c(z,\beta,\alpha)$ is computed from analytic expressions (\ref{eq:poisson_area}) and (\ref{eq:poisson_capacity}).

\begin{figure}[!t]
\centering
\includegraphics[scale=1.5]{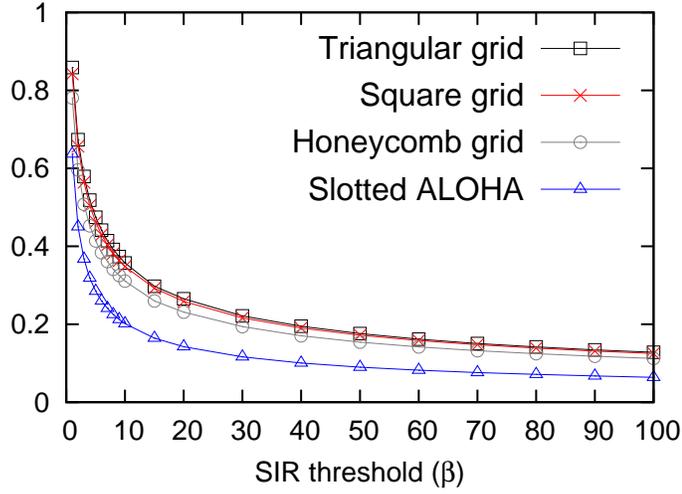}
\caption{Local capacity, $c(z,\beta,\alpha)$, when $\beta$ is varying and $\alpha$ is fixed at $4.0$.
\label{fig:comparison1}}
\end{figure}
\begin{figure}[!t]
\centering
\includegraphics[scale=1.5]{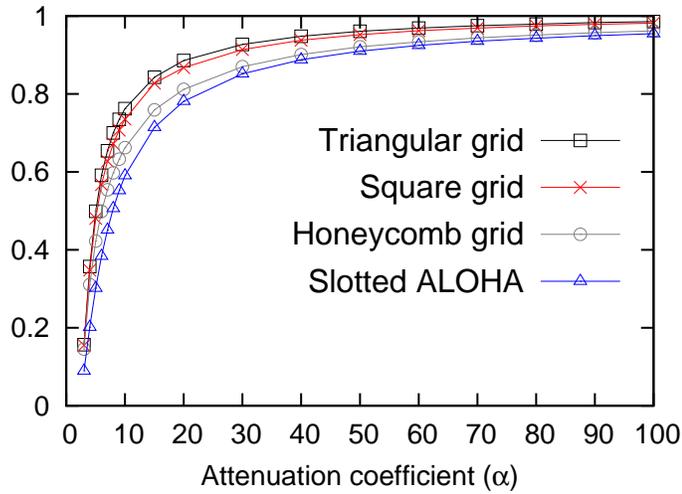}
\caption{Local capacity, $c(z,\beta,\alpha)$, when $\beta$ is fixed at $10.0$ and $\alpha$ is varying.
\label{fig:comparison2}}
\end{figure}

Figures \ref{fig:comparison1} and \ref{fig:comparison2} show $c(z,\beta,\alpha)$ of grid patterns and slotted ALOHA. As $\alpha$ approaches infinity, reception area around each transmitter turns to be a Voronoi cell with an average size equal to $1/\lambda$. Therefore, as $\alpha$ approaches infinity, $c(z,\beta,\alpha)$ approaches 1. For slotted ALOHA scheme, (\ref{eq:poisson_area}) and (\ref{eq:poisson_capacity}) arrive at the same result. For grid patterns, we computed $c(z,\beta,\alpha)$ with $\alpha$ increasing up to $100$ and from the results, we can observe that asymptotically, as $\alpha$ approaches infinity, $c(z,\beta,\alpha)$ approaching 1 is true for all protocols. We can also see that the maximum capacity in wireless networks can be obtained with triangular grid pattern and its local capacity is {\it at most} double the capacity of slotted ALOHA.

\section{Conclusions}
\label{sec:conclude}

Our analysis shows that transmission scheme, based on triangular grid pattern is locally optimal. Moreover, compared to slotted ALOHA, which does not use any significant protocol overhead, triangular grid pattern can only increase the local capacity by a factor of $2$. The conclusion of this work is that improvements above ALOHA are limited in performance and may have significantly higher protocol overheads. Note that, considering the associated protocol overheads, our results may encourage network protocol designers to concentrate on designing robust protocols based on simpler medium access schemes. In the continuation of this work, we extend this analysis to include more practical schemes like node coloring and carrier sense based schemes and see how do they compare with the schemes discussed in this article. 

\bibliographystyle{hieeetr}
\bibliography{optimal_capacity_arxiv}

\begin{thebibliography}{10}

\bibitem{Gupta:Kumar}
P.~Gupta and P.~R. Kumar, ``The {C}apacity of {W}ireless {N}etworks,'' {\em
  IEEE Transactions on Information Theory}, vol.~46, no.~2, 2000.

\bibitem{scaling}
F.~Xue and P.~R. Kumar, ``Scaling {L}aws for {A}d {H}oc {W}ireless {N}etworks:
  an {I}nformation {T}heoretic {A}pproach,'' in {\em Foundations and Trends in
  Networking}, vol.~1, 2006.

\bibitem{scaling2}
A.~Jovicic, P.~Viswanath, and S.~Kulkarni, ``Upper {B}ounds to {T}ransport
  {C}apacity of {W}ireless {N}etworks,'' {\em IEEE Transactions on Information
  Theory}, vol.~50, Nov. 2004.

\bibitem{Jacquet:2009}
P.~Jacquet, ``Shannon {C}apacity in {P}oisson {W}ireless {N}etwork {M}odel,''
  {\em Probl. Inf. Transm.}, vol.~45, 2009.

\bibitem{Nelson:Kleinrock}
R.~Nelson and L.~Kleinrock, ``The {S}patial {C}apacity of {S}lotted {ALOHA}
  {M}ultihop {P}acket {R}adio {N}etwork with {C}apture,'' {\em IEEE
  Transactions on Communications}, vol.~32, June 1984.

\bibitem{CSMA}
L.~Kleinrock and F.~Tobagi, ``Packet {S}witching in {R}adio {C}hannels: {P}art
  1 - {C}arrier {S}ense {M}ultiple-{A}ccess {M}odes and {T}heir
  {T}hroughput-{D}elay {C}haracteristics,'' {\em IEEE Transactions on
  Communications}, vol.~23, Dec. 1975.

\bibitem{Bartek}
B.~{B}laszczyszyn, P.~{M}uhlethaler, and S.~{B}anaouas, ``{A} {C}omparison of
  {ALOHA} and {CSMA} in {W}ireless {A}d {H}oc {N}etworks under {D}ifferent
  {C}hannel {C}onditions,'' {\em INRIA-00530093}, 2010.

\bibitem{Weber}
S.~Weber, X.~Yang, J.~Andrews, and G.~de~Veciana, ``Transmission {C}apacity of
  {W}ireless {A}d {H}oc {N}etworks {W}ith {O}utage {C}onstraints,'' {\em IEEE
  Transactions on Information Theory}, vol.~51, Dec. 2005.

\bibitem{Weber2}
S.~Weber, J.~Andrews, and N.~Jindal, ``An {O}verview of the {T}ransmission
  {C}apacity of {W}ireless {N}etworks,'' {\em IEEE Transactions on
  Communications}, vol.~58, Dec. 2010.

\bibitem{Weber3}
R.~K. Ganti, J.~G. Andrews, and M.~Haenggi, ``High-{SIR} {T}ransmission
  {C}apacity of {W}ireless {N}etworks with {G}eneral {F}ading and {N}ode
  {D}istribution,'' {\em CoRR}, vol.~abs/1010.1584, 2010.

\bibitem{Haenggi}
M.~Haenggi, ``Outage, {L}ocal {T}hroughput, and {C}apacity of {R}andom
  {W}ireless {N}etworks,'' {\em IEEE Transaction on Wireless Communication},
  vol.~8, Aug. 2009.

\bibitem{Zorzi2}
M.~Zorzi and S.~Pupolin, ``Optimum {T}ransmission {R}anges in {M}ultihop
  {P}acket {R}adio {N}etworks in the {P}resence of {F}ading,'' {\em IEEE
  Transactions on Communications}, vol.~43, July 1995.

\bibitem{SR-ALOHA}
F.~Baccelli, B.~Blaszczyszyn, and P.~Muhlethaler, ``An {ALOHA} {P}rotocol for
  {M}ultihop {M}obile {W}ireless {N}etworks,'' {\em IEEE Transactions on
  Information Theory}, vol.~52, Feb. 2006.

\bibitem{Liu:Haenggi}
X.~Liu and M.~Haenggi, ``Throughput {A}nalysis of {F}ading {S}ensor {N}etworks
  with {R}egular and {R}andom {T}opologies,'' {\em EURASIP J. Wirel. Commun.
  Netw.}, vol.~2005, Sept. 2005.

\bibitem{Hong:Hua}
K.~Hong and Y.~Hua, ``Throughput {A}nalysis of {L}arge {W}ireless {N}etworks
  with {R}egular {T}opologies,'' {\em EURASIP J. Wirel. Commun. Netw.}, Jan.
  2007.

\end{thebibliography}

\end{document}